\newtheorem{theorem}{Theorem}[section]
\newtheorem{remark}{Remark}[section]
\newtheorem{definition}{Definition}[section]
\newcommand{\cH}{\mathcal{H}}
\newcommand{\cF}{\mathcal{F}}
\newcommand{\cS}{\mathcal{S}}
\newcommand{\cL}{\mathcal{L}}
\newcommand{\cK}{\mathcal{K}}
\newcommand{\bbE}{\mathbb{E}}
\newcommand{\bbI}{\mathbb{I}}
\newcommand{\bbP}{\mathbb{P}}
\newcommand{\sF}{\mathscr{F}}
\newcommand{\sG}{\mathscr{G}}
\newcommand{\bra}[1]{\left\langle#1\right|}
\newcommand{\ket}[1]{\left|#1\right\rangle}
\newcommand{\tr}[1]{\mathrm{Tr}\left\{#1 \right\}}
\newcommand{\trc}[2]{\mathrm{Tr}_{#2}\left\{#1 \right\}}
\newcommand{\lsim}{\mathrel{\hbox{\rlap{\lower.55ex \hbox{$\sim$}} \kern-.3em \raise.4ex \hbox{$<$}}}}
\newcommand{\hA}{\hat{A}}
\newcommand{\hB}{\hat{B}}
\newcommand{\hC}{\hat{C}}
\newcommand{\hP}{\hat{P}}
\newcommand{\CP}{\mathrm{CP}}
\newcommand{\hAb}{\hat{\bf A}}
\newcommand{\hBb}{\hat{\bf B}}
\newcommand{\hbbE}{\hat{\bbE}}
\newcommand{\qd}[1]{\langle#1\vert}
\newcommand{\q}[1]{\vert#1\rangle}
\newcommand{\al}[1]{#1}
\newcommand{\zvo}[1]{#1}
\newcommand{\zva}[1]{#1}
\title{\LARGE \bf Quantum filtering using POVM measurements}
\author{Ram A. Somaraju, Alain Sarlette and Hugo Thienpont\thanks{Ram A. Somaraju and Hugo Thienpont are with the Brussels Photonics Team, Dept. of Applied Physics and Photonics, Vrije Universiteit Brussel, Pleinlaan 2-1050 Brussels, Belgium. Alain Sarlette is with the SYSTeMS research group, Faculty of Engineering and Architecture, Ghent University, Technologiepark Zwijnaarde 914, 9052 Zwijnaarde(Ghent), Belgium. {\tt\footnotesize a.somaraju@gmail.com, alain.sarlette@ugent.be, hthienpo@b-phot.org}}\thanks{R. A. Somaraju and H.Thienpont would like to thank Methusalem project IPARC@VUB for financial support as well as the BELSPO IAP project Photonics@BE. A.Sarlette and R. A. Somaraju are members of the BELSPO IAP project DYSCO.}\thanks{This paper presents research results funded by the Interuniversity Attraction Poles Programme, initiated by the Belgian State, Science Policy Office. The scientific responsibility rests with its authors.}}
\begin{document}

\maketitle
\thispagestyle{empty}
\pagestyle{empty}

\begin{abstract} 
The objective of this work is to develop a recursive, discrete time quantum filtering equation for a system that interacts with a probe, on which measurements are performed according to the Positive Operator Valued Measures (POVMs) framework. POVMs are the most general measurements one can make on a quantum system and although in principle they can be reformulated as projective measurements on larger spaces, for which filtering results exist, a direct treatment of POVMs is more natural and can simplify the filter computations for some applications. Hence we formalize the notion of strongly commuting (Davies) instruments which allows one to develop joint measurement statistics for POVM type measurements. \al{This allows us to prove the existence of conditional \zvo{POVMs}, which is essential for the development of a filtering equation. We demonstrate that under generally satisfied assumptions, knowing the observed probe POVM operator is sufficient to uniquely specify the quantum filtering evolution for the system.}
\end{abstract}

\section{Introduction}

The theory of filtering considers the estimation of the system state from noisy and/or partial observations (see e.g.~\cite{Bensoussan1992aa}). For quantum systems, filtering theory was initiated in the 1980s by Belavkin in a series of papers~\cite{Belavkin1980aa,Belavkin1983aa,Belavkin1988aa,Belavkin1992aa}. Belavkin makes use of the operational formalism of Davies~\cite{Davies1976aa}, which is a precursor to the theory of quantum filtering. He has also realized that due to the unavoidable back-action of quantum measurements, the theory of filtering plays a fundamental role in quantum feedback control (see e.g.~\cite{Belavkin1983aa,Belavkin1992aa}). The theory of quantum filtering was independently developed in the physics community, particularly in the context of quantum optics, under the name of quantum trajectory theory~\cite{Braginsky1992aa,Haroche2006aa,Gardiner2004aa,Wiseman2010aa}.

The basic model used to derive filtering equations for a quantum system uses a system-probe interaction. A quantum system, whose state needs to be estimated, is made to interact with a probe and the state of the system becomes entangled with that of the probe. After this interaction, an observable is measured on the probe and this measurement outcome is used to estimate the state of the system. The commutativity of any system observable with any probe observable is used to develop a recursive Markov filtering equation for the system observables (see e.g.~\cite{Bouten2007aa,Bouten2009aa} for an excellent tutorial). 

Suppose $\cH_S$ is the Hilbert space corresponding to the system whose state needs to be estimated and $\cH_P$ is the Hilbert space of the probe. According to the classical von Neumann definition, any probe observable is a self-adjoint operator $Q$ on $\cH_P$; the measurement of such an observable results in an outcome that is (stochastically) an eigenvalue of $Q$, and the probe state after measurement gets projected onto the corresponding eigenspace of $Q$. As far as we are aware, all discussions on quantum filtering theory so far have assumed that the probe undergoes such a von Neumann measurement, also called projective measurement or Projection Valued Measure (PVM). However, a more modern treatment of quantum measurement theory shows that the most general possible quantum measurements that one can perform are the so-called Positive Operator Valued Measures\footnote{The terminology
is not standard and other terms such as Positive Operator Measures or generalized measurements are also used.} (POVMs), of which von Neumann measurements are merely a special case where all the operators are commuting projections~\cite{Davies1976aa,Shapiro1991aa}. See Section~\ref{sec:revPOVM} for a brief overview of POVMs.

POVMs on $\cH_P$ can be reformulated as the restriction to $\cH_P$ of a PVM on a larger space. However, there is no canonical PVM that corresponds uniquely to a given POVM. This is closely related to the fact that the state of a quantum system after a POVM measurement is not uniquely determined as a function of the POVM. To remedy the latter situation, Davies~\cite[Ch.3]{Davies1976aa} has shown that one can associate a (non-unique) instrument to any POVM, which determines a completely positive map that specifies the state after measurement conditioned on the measurement outcome. However, there is again no canonical instrument that corresponds to a given POVM. Therefore, it is impossible to uniquely specify the post-measurement state for a given POVM measurement outcome unless the instrument associated with the measurement is known. As stated by Nielsen and Chuang~\cite[p.~91]{Nielsen2010aa}, ``POVMs are best viewed as [...] providing the simplest means by which one can study general measurement statistics, without the necessity for knowing the post-measurement state.'' As such they are a minimal description of quantum measurements, so one can hope that the POVM formalism leads to more concise and fast filtering equations, suited for (possibly analog) implementation in real-time quantum feedback experiments.

There are other reasons to develop a POVM-based filtering theory that shortcuts the lift to PVMs in larger spaces. For one, once the POVM theory is available it can be more natural to use, as several practical measurement setups are based on POVMs, such as e.g.~approximate position/momentum measurements~\cite[Ch. 3]{Davies1976aa} or phase measurements~\cite{Shapiro1991aa,Berry2002aa}. In some infinite-dimensional situations there can even be conceptual barriers to a PVM viewpoint. Indeed, with phase measurements after much investigation there is still no universal agreement on an acceptable PVM~\cite{Lynch1995aa}. Finally, regarding system identification, the POVM associated with any experimental setup can (at least conceptually) be directly deduced from measurement outcomes; in contrast, in order to ascertain the associated instrument it is necessary to analyze the post-measurement state of the system (for more details see Section~\ref{sub:estPOVM}). This is not generally feasible in practical experimental setups where often the measurements destroy the quantum state (e.g. photo-detection) and/or non-measurably alter it due to interaction with the environment. 

In this paper, we develop a discrete time filtering equation for the system state conditioned on POVM measurements performed on a probe. After reviewing the POVM formalism (Section \ref{sec:revPOVM}), we provide a general theory about (strongly) commuting instruments that are associated to POVMs (Section \ref{sec:commIns}). \al{We then generalize the filtering framework of \cite{Bouten2007aa,Bouten2009aa} from PVMs to POVMs. First Section \ref{sec:condexp} illustrates POVM-specific difficulties in the conditional expectation approach. Section \ref{sec:condprob} then defines conditional probabilities for POVMs.} In the setup consisting of a probe coupled to a target system, any (physically reasonable) instrument associated with a POVM acting only on the probe, strongly commutes with any instrument associated with a POVM acting only on the target system. In Section \ref{sec:Filtering} we show how this allows to define a filtering equation for the system state conditioned on \al{probe} POVM measurement outcomes. \al{This} filtering equation is only a function of the observed probe POVM and does not depend on the associated instrument nor other POVM elements.


\section{Review: POVMs and associated instruments}\label{sec:revPOVM}

The POVM formalism is a standard part of most modern quantum information textbooks. We briefly review it here and refer the interested reader to~\cite[Ch. 3]{Davies1976aa} for more details. 

Consider a quantum system with Hilbert space $\cH$, i.e.~the system state is given by a density operator $\rho$ which is a unit-trace nonnegative self-adjoint linear operator on $\cH$. We use $\phantom{l}^*$ to denote the adjoint. Denote by $\cL(\cH)$ the set of linear operators on $\cH$, 
by $\cL_+(\cH) \subset \cL(\cH)$ the set of self-adjoint nonnegative linear operators, and by $\cS(\cH) \subset \cL_+(\cH)$ the set of all possible density operators (i.e. non-negative trace class operators of unit trace). Standard textbook treatment of quantum measurements assumes that any physically measurable quantity $\hat{A}$ is associated to a self-adjoint operator $A:\cH\to \cH$. Because $A$ is self-adjoint, we have the spectral decomposition\footnote{For clarity of explanation, we assume that $A$ has a discrete spectrum. The discussion easily generalizes to the continuous spectrum situation.} 
\begin{equation}\label{eqn:spec}
A = \sum_{\omega\in \Omega} \omega P_{\omega}
\end{equation}
where $\Omega$ is the set of eigenvalues of $A$ and $P_\omega$ is the eigenprojection corresponding to eigenvalue $\omega$. Starting with a system in state $\rho$, according to von Neumann's measurement postulates we have: 
\begin{enumerate}
\item any measurement of the observable $A$ gives some outcome $\omega\in \Omega$ with probability $\tr{\rho P_\omega}$, and 
\item after measurement outcome $\omega$, the state of the system becomes
$$\rho' = \frac{P_\omega\rho P_\omega}{\tr{P_\omega\rho P_\omega}}\, .$$
\end{enumerate}

The first postulate can be thought of as:
$\Omega$ is the set of all possible measurement outcomes of an experimental setup $(\hat{A})$ and to each $\omega\in\Omega$, one assigns a projection $P_\omega$ in 
$\cH$ such that $\tr{\rho P_\omega}$ is the probability of measuring $\omega$. This motivates the following generalization.
\begin{definition}\cite[Def 3.1.1]{Davies1976aa} Let $\Omega$ be a set, $\sF$ a $\sigma$-field of subsets of $\Omega$, and $\cH$ a Hilbert space. Then a $\cH$-valued Positive Operator Valued Measure (POVM) on $\Omega$ is a map $\hA:\sF\to \cL_+(\cH) ;\ E\to \hA(E) = \hA_E$ such that 
\begin{enumerate}
\item $\hA(E) \geq \hA(\emptyset) = 0$ for all $E\in \sF$;
\item For any countable, mutually disjoint collection $\{E_n\} \subset \sF$ we have
$\; \hA\left(\bigcup_{n} E_n\right) = \sum_n \; \hA(E_n)$ \newline
where the series convergence on the right is in the weak operator topology;
\item $\hA(\Omega) = \bbI_\cH$, the identity operator on $\cH$. 
\end{enumerate}
\end{definition}
A POVM that corresponds to a physical experiment has a simple interpretation. The set $\Omega$ is the sample space corresponding to experimental outcomes so that the $\sigma$-field $\sF$ consists of the set of all events. The POVM $\hA$ and a state $\rho$ on $\cH$ induce a measure $\mu_{\rho,\hA}(\cdot) = \tr{\rho\hA(\cdot)}$ on $\Omega$ so that $\mu_{\rho,\hA}(E)$ gives the probability of event $E\in \sF$. We use the notation $\hA\in E$ to denote the event that the measurement of POVM $\hA$ resulted in a value in $E\in \sF$.

Note that here $\Omega$ can have any general structure. This allows one to describe measurement apparatuses with outcomes that are physically e.g.~multi-dimensional or on a manifold topology like the circle or sphere, which is not possible with standard von Neumann measurements. The latter are indeed equivalent to a special case of POVMs called \emph{Projection Valued Measures (PVMs)}, which require that $\Omega$ is a closed subset of the real line and that the range of $\hA$ only consists of commuting projections. The unique correspondence between PVMs ($\hat{A}$) and self-adjoint operators describing von Neumann measurements ($A$) is obtained through \eqref{eqn:spec} by setting $P_\omega = \hA(\omega)$ for each $\omega \in \Omega$.

It has been shown that any POVM on $\cH$ can be viewed as the restriction to $\cH$ of a PVM on a larger Hilbert space.
\begin{theorem}\cite[Th.9.3.2]{Davies1976aa}\label{the:POVMIns} Let $\Omega$ be a compact metrizable space with Borel field $\sF$, and $\hA$ a POVM taking values in $\cL_+(\cH)$. Then there exists a Hilbert space $\cK\supset \cH$ and a PVM $\hA^{\cK}: \sF\to \cL_+(\cK)$ such that if $P$ is the orthogonal projection from $\cK$ onto $\cH$ then $\hA(E)$ is the restriction of $P\hA^{\cK}(E)P$ to $\cH$ for all Borel sets $E$.
\end{theorem}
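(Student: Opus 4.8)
The statement is Naimark's dilation theorem for POVMs, and the plan is to construct $\cK$ explicitly as a completion of a space of $\cH$-valued step functions on $\Omega$, on which $\hA^{\cK}$ acts simply as multiplication by indicator functions.

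First I would let $\cV_0$ denote the complex vector space of finite formal sums $f=\sum_i \mathbf{1}_{E_i}\otimes v_i$ with $E_i\in\sF$ and $v_i\in\cH$, interpreted as simple $\cH$-valued functions on $\Omega$, and equip it with the sesquilinear form
\[
\iprod{\sum\nolimits_i \mathbf{1}_{E_i}\otimes v_i,\ \sum\nolimits_j \mathbf{1}_{F_j}\otimes w_j}_0 \;=\; \sum_{i,j}\iprod{v_i,\ \hA(E_i\cap F_j)\,w_j}_{\cH}\, .
\]
The essential point is that this form is positive semidefinite: given $f=\sum_i\mathbf{1}_{E_i}\otimes v_i$, one refines the $E_i$ to a finite disjoint family using finite additivity of $\hA$, after which $\iprod{f,f}_0=\sum_i \iprod{v_i,\hA(E_i)v_i}_{\cH}\ge 0$ since each $\hA(E_i)\in\cL_+(\cH)$. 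Writing $\mathcal{N}=\{f\in\cV_0:\iprod{f,f}_0=0\}$, which is a subspace by Cauchy--Schwarz, the quotient $\cV_0/\mathcal{N}$ carries a genuine inner product; I define $\cK$ to be its completion. The embedding $\cH\hookrightarrow\cK$ is $\iota:v\mapsto[\mathbf{1}_\Omega\otimes v]$, which is isometric because $\iprod{\iota(v),\iota(w)}_\cK=\iprod{v,\hA(\Omega)w}_\cH=\iprod{v,w}_\cH$ by axiom (3) of a POVM; I identify $\cH$ with $\iota(\cH)\subset\cK$ and let $P$ be the orthogonal projection of $\cK$ onto it.

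Next I would define $\hA^{\cK}(E)$, $E\in\sF$, on $\cV_0$ by $\hA^{\cK}(E)(\mathbf{1}_F\otimes v)=\mathbf{1}_{E\cap F}\otimes v$ extended linearly, i.e.\ pointwise multiplication by $\mathbf{1}_E$. Then I would check, in order: it maps $\mathcal{N}$ into $\mathcal{N}$ and is a contraction on $\cV_0/\mathcal{N}$, hence extends to a bounded operator on $\cK$; it is self-adjoint and idempotent with $\hA^{\cK}(\Omega)=\bbI_\cK$; and $E\mapsto\hA^{\cK}(E)$ is countably additive in the strong operator topology, so that $\hA^{\cK}$ is a PVM on $\cK$. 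Finally the compression property is immediate: for $v,w\in\cH$,
\[
\iprod{v,\ P\hA^{\cK}(E)P\,w}_\cH=\iprod{\iota(v),\ \hA^{\cK}(E)\iota(w)}_\cK=\iprod{\mathbf{1}_\Omega\otimes v,\ \mathbf{1}_E\otimes w}_\cK=\iprod{v,\ \hA(E)\,w}_\cH\, ,
\]
so $P\hA^{\cK}(E)P$ restricted to $\cH$ equals $\hA(E)$.

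I expect the main obstacle to be the analytic, rather than algebraic, parts of the construction: verifying that $\hA^{\cK}(E)$ is well defined and bounded on the quotient $\cV_0/\mathcal{N}$, and upgrading the merely weak countable additivity assumed for $\hA$ to strong (projection-valued) additivity on the completion $\cK$. This is precisely where the hypotheses that $\Omega$ is compact metrizable and $\sF$ is its Borel field enter, controlling separability and regularity; without such care one would only obtain a finitely additive projection-valued set function. The remaining verifications are routine bookkeeping with indicator functions and the additivity of $\hA$.
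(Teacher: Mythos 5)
The paper offers no proof of this statement: it is imported verbatim from Davies (Th.~9.3.2), so the comparison here is with the standard proofs in the literature rather than with anything in the text. Your construction is the classical direct (GNS-style) proof of Naimark's dilation theorem and it is correct: the refinement of the $E_i$ into disjoint atoms of the finite algebra they generate does establish positive semidefiniteness of the form (and also shows that two formal sums representing the same simple function differ by an element of $\mathcal{N}$); multiplication by $\mathbf{1}_E$ is a contraction on the quotient because $\hA(E\cap D)\le \hA(D)$ for disjoint refinements, hence it preserves $\mathcal{N}$ and extends to a projection on $\cK$; and the compression identity follows from $\hA(\Omega\cap E)=\hA(E)$. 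One correction of emphasis: the step you single out as the main obstacle, upgrading weak to strong countable additivity on $\cK$, is in fact automatic once you know each $\hA^{\cK}(E)$ is a projection. For disjoint $E_n$ the partial sums $\sum_{k\le n}\hA^{\cK}(E_k)$ form a bounded increasing sequence of positive operators, and such a sequence converges strongly if and only if it converges weakly; weak convergence to $\hA^{\cK}(\bigcup_n E_n)$ need only be checked against the dense set of simple functions, where it reduces to the assumed weak countable additivity of $\hA$. In particular, compactness and metrizability of $\Omega$ are not needed for your argument, which works for an arbitrary measurable space. Those hypotheses appear in Davies's statement because his route is different: a (regular) POVM on a compact $\Omega$ is equivalent to a positive unital map $C(\Omega)\to\cL(\cH)$, which is automatically completely positive since $C(\Omega)$ is commutative, and Stinespring's dilation then produces a $*$-representation of $C(\Omega)$ on $\cK\supset\cH$ whose associated spectral measure is the desired PVM. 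Your approach is more elementary and slightly more general; Davies's ties the result to the completely-positive-map machinery used throughout his book.
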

In principle, the existing filtering theory for PVMs \cite{Bouten2007aa,Bouten2009aa} thus covers the needs of POVM-based filtering, modulo a proper lift of the Hilbert space. However, the latter is not unique and for reasons explained in Section I, it makes sense to look for a POVM theory that does not build on its reduction from a PVM.\vspace{3mm}

To generalize the second measurement postulate, Davies~\cite{Davies1976aa} introduces the notion of an instrument as a complement to POVMs. In the following we denote by $\CP(\cH)$ the set of all completely positive (CP) maps\footnote{CP maps are the most general quantum evolutions, see~\cite[Sec.9.2]{Davies1976aa} and~\cite[Ch.8]{Nielsen2010aa} for a discussion. The definition given below for instruments is different from that given in~\cite{Davies1976aa}, wherein the assumption of complete positivity is replaced by positivity; see the discussion in~\cite[Sec.9.2]{Davies1976aa}.} ${\bf A}:\rho\mapsto {\bf A}(\rho)$ on states $\rho\in \cS(\cH)$.
\begin{definition}\cite[Def 4.1.1]{Davies1976aa} Let $\Omega$ be a set, $\sF$ a $\sigma$-field of subsets of $\Omega$, and $\cH$ a Hilbert space. Then a $\cH$-valued instrument on $\Omega$ is a map $\hAb:\sF\to \CP(\cH);\ E \to \hAb_E(\cdot)$ such that 
\begin{enumerate}
\item $\hAb_E \geq \hAb_\emptyset = 0$ for all $E\in \sF$;
\item For any countable, mutually disjoint collection $\{E_n\} \subset \cF$ we have
$\; \hAb_{(\cup_{n} E_n)} = \sum_n \; \hAb_{E_n}$\newline
where the series convergence on the right is in the weak operator topology;
\item $\tr{\hAb_\Omega(\rho)} = \tr{\rho}$, for all $\rho\in \cS(\cH)$. 
\end{enumerate}
\end{definition}
If an experiment is set up so that the outcomes take values in some set $\Omega$, then for a quantum system prepared in state $\rho\in \cS(\cH)$ the measurement postulates for an instrument write:
\begin{enumerate}
  \item an outcome in the set $E\subset \Omega$ is obtained with probability $\bbP(E) = \tr{\hAb_E(\rho)}$;
  \item the state of the system conditioned on a measurement outcome in set $E$ is $\hAb_E(\rho)/\bbP(E)$.
\end{enumerate}

\begin{theorem}\cite[Th.3.1.3, Th.9.2.3]{Davies1976aa}\label{the:InsPOVM} 
If $\hAb$ is an instrument then for all $E\in \sF$, there exits a (non-unique) countable set $\{ A_{n}(E) \}_{n\in N}\subset \cL(\cH)$ such that
\begin{equation}\label{newrep}
\hAb_E(\rho) = \sum_{n\in N} A_{n}(E) \rho A_{n}(E)^*\, .
\end{equation}
Moreover, there exists a unique POVM $\hA$ on $\Omega$ associated to $\hAb$ such that for all $E\subset \Omega$ and $\rho \in \cS(\cH)$ we have:
\begin{equation}\label{eq:IPcorr}
\tr{\hAb_E(\rho)} = \tr{\rho \hA_E} \, .
\end{equation}
In particular, this unique POVM is given by
$$\hA(E) = \sum_{n\in N} A_n(E)^* \,A_n(E) \textrm{ for all } E\in \sF.$$
\end{theorem}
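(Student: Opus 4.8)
I would prove the two assertions in turn, establishing the Kraus-type decomposition \eqref{newrep} first and then constructing and characterizing the associated POVM. The starting observation for \eqref{newrep} is that each $\hAb_E$ is not only completely positive but also trace non-increasing on states: axiom~2 of the instrument definition applied to the partition $\{E,\Omega\setminus E\}$ gives $\hAb_\Omega=\hAb_E+\hAb_{\Omega\setminus E}$, axiom~1 makes both summands CP and hence positivity-preserving, so $\hAb_\Omega-\hAb_E=\hAb_{\Omega\setminus E}\ge0$, and evaluating on $\rho\in\cS(\cH)$ together with axiom~3 yields $\tr{\hAb_E(\rho)}\le\tr{\rho}$. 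A completely positive, trace non-increasing (in particular bounded) map on the trace class admits a Stinespring dilation; choosing a countable orthonormal basis of the dilation space and reading off the corresponding ``matrix elements'' produces a countable family $\{A_n(E)\}_{n\in N}\subset\cL(\cH)$ with $\hAb_E(\rho)=\sum_n A_n(E)\rho A_n(E)^*$, the series converging in the weak operator topology, and with $\sum_n A_n(E)^*A_n(E)\le\bbI_\cH$ (see \cite[Ch.~9]{Davies1976aa}). Non-uniqueness is transparent: replacing $A_n(E)$ by $\sum_m u_{nm}A_m(E)$ for any isometry $(u_{nm})$ leaves $\hAb_E$ unchanged.

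Next I would \emph{define} $\hA(E):=\sum_{n\in N}A_n(E)^*A_n(E)$ and show it does not depend on the chosen Kraus family. The crucial identity is that for every $\rho\in\cS(\cH)$,
\[ \tr{\hAb_E(\rho)}=\sum_n\tr{A_n(E)\rho A_n(E)^*}=\sum_n\tr{\rho\,A_n(E)^*A_n(E)}=\tr{\rho\,\hA(E)}, \]
where cyclicity of the trace and the non-negativity of the summands justify interchanging sum and trace. Taking $\rho=\kebr{\psi}$ for an arbitrary unit vector $\psi$ shows $\bra{\psi}\hA(E)\ket{\psi}$ is a function of $\hAb_E$ alone, and polarization then determines $\hA(E)$ itself; this simultaneously proves that $\hA(E)$ is well defined and that \eqref{eq:IPcorr} holds. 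Uniqueness of the associated POVM is the same argument once more: if $\hA'$ also satisfies \eqref{eq:IPcorr}, then $\tr{\rho(\hA(E)-\hA'(E))}=0$ for all $\rho\in\cS(\cH)$, hence $\hA(E)=\hA'(E)$.

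It then remains to check that $\hA$ is a POVM. Positivity $\hA(E)\ge0$ is immediate from the defining sum, and $\hA(\emptyset)=0$ because $\hAb_\emptyset=0$. For the normalization, axiom~3 gives $\tr{\rho\,\hA(\Omega)}=\tr{\hAb_\Omega(\rho)}=\tr{\rho}$ for all $\rho$, whence $\hA(\Omega)=\bbI_\cH$ by the same polarization argument. For countable additivity, let $\{E_k\}\subset\sF$ be mutually disjoint; the weak-operator convergence of $\hAb_{(\cup_k E_k)}=\sum_k\hAb_{E_k}$ implies in particular $\tr{\hAb_{(\cup_k E_k)}(\rho)}=\sum_k\tr{\hAb_{E_k}(\rho)}$, that is $\tr{\rho\,\hA(\cup_k E_k)}=\sum_k\tr{\rho\,\hA(E_k)}$ for all $\rho$; pairing with rank-one projectors yields $\hA(\cup_k E_k)=\sum_k\hA(E_k)$ in the weak operator topology, as needed.

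The step I expect to be the genuine obstacle is the first one: rigorously producing the Kraus decomposition with the correct weak-operator convergence and the bound $\sum_n A_n(E)^*A_n(E)\le\bbI_\cH$ in the possibly infinite-dimensional, trace-class setting, and likewise justifying the term-by-term trace manipulation in the displayed identity. All of this is standard once Stinespring's dilation theorem and the positivity of the relevant summands are invoked, so I would import it from \cite[Ch.~9]{Davies1976aa}; everything downstream is elementary functional-analytic bookkeeping.
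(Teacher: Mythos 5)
The paper does not prove this theorem at all --- it is imported verbatim from Davies \cite[Th.3.1.3, Th.9.2.3]{Davies1976aa} --- so there is no in-paper argument to compare against; your proof is the standard one that those citations point to (Kraus/Stinespring decomposition of the trace-non-increasing CP maps $\hAb_E$, definition of $\hA(E)=\sum_n A_n(E)^*A_n(E)$, well-definedness and uniqueness via the trace identity and polarization, then verification of the POVM axioms), and it is correct. The only implicit hypothesis worth making explicit is separability of $\cH$, which is what guarantees a \emph{countable} Kraus family; otherwise the argument is complete modulo the dilation theorem you rightly import from \cite[Ch.~9]{Davies1976aa}.
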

Theorem~\ref{the:POVMIns} implies that one can also always construct an instrument corresponding to a given POVM. However, there is no unique nor canonical way to choose the instrument without further information about the physical system.


\subsection{System identification of POVMs}\label{sub:estPOVM}
	
Consider an experimental setup corresponding to unknown instrument $\hAb$, associated to POVM $\hA$. In order to experimentally determine $\hA$, one can initialize the system being measured in some state $\rho = \ket{\phi}\bra{\phi}$ and for any $E\in \sF$, the probability of measurement outcome in set $E$ is
$$\bbP[\hA\in E] = \bra{\phi} \hA(E) \ket{\phi}.$$ 
With sufficiently many experimental outcomes, one can estimate the classical probability distribution $\bra{\phi} \hA(E) \ket{\phi}$ over all $E \subset \Omega$; doing this for different $\ket{\phi}$ and using polarization then allows to calculate $\hA(E)$ itself. In order to ascertain the instrument $\hAb$ however, we must have access to the state $\hAb_E(\rho)$ e.g.~performing a state tomography experiment. This is often impractical in experimental setups.
\zvo{
\begin{remark}\label{rem:povmIns}
In the following sections, we use instruments in theoretical developments and to examine general properties of the measurement settings we consider. The goal is however to show that our final filtering equation uses POVM data only.
\end{remark}}


\subsection{Notation} 
In the remainder of the paper we will use $\hAb,\hBb,\ldots$ to denote instruments and $\hA,\hB,\ldots$ to denote the POVMs corresponding to these instruments. Also, if an instrument $\hAb$ corresponds to a von Neumann measurement, then we denote by $A$ the associated self-adjoint operator. We will use the terms PVM and self-adjoint operator interchangeably.


\section{Commuting Instruments}\label{sec:commIns}

\al{The classical development of filtering equations builds} on joint probabilities, which are not obvious in the quantum context. Therefore the central idea in~\cite{Bouten2007aa} is that in order to define a conditional expectation of two self-adjoint operators, the two operators must commute with each other. \zvo{The von Neumann measurement postulates imply that `joint' measurement statisics can be defined only in this situation.}

We now wish to generalize the \al{filtering framework} of \cite{Bouten2007aa,Bouten2009aa} towards POVMs and for this we need to understand when it is possible to measure two POVMs simultaneously. In fact this depends on the commutativity of the instruments used to implement the POVMs.
\begin{definition}\label{def:comm}
Suppose $(\Omega_1,\sF_1)$ and $(\Omega_2,\sF_2)$ are two measure spaces and $\cH$ is some Hilbert space. Then two $\cH$-valued instruments, $\hAb_1:\sF_1\to \CP(\cH)$ and $\hAb_2:\sF_2\to \CP(\cH)$ are said to  \emph{strongly commute} if for all $E_1\in \sF_1$ and $E_2\in \sF_2$, there exist sequences of operators $\{A^1_m(E_1):m\in N_1\}$ and $\{A^2_m(E_2):m\in N_2\}$ in $\cL(\cH)$ such that:
\newline $\bullet$ the instruments write (cf. Theorem~\ref{the:InsPOVM})
\begin{eqnarray}
\label{asn1} \hAb_{i,(E_i)}(\rho) = \sum_{m\in N_i} {A^i_m}(E_i)\, \rho \, A^i_m(E_i)^* \quad \forall \rho\, , \; i=1,2
\end{eqnarray}
$\bullet$ for all $m\in N_1$ and $n\in N_2$ the commutator $[,]$ gives:
\begin{eqnarray}\label{asn3}
[A^1_m(E_1),A^2_n(E_2)] & = & [A^1_m(E_1),{A^2_n}(E_2)^\ast] = 0 \, .
\end{eqnarray}
\end{definition}
\vspace{3mm}

\begin{remark}
For the special case of PVMs, $N_i=1$ for all $i$ and the commutativity condition of Def.~\ref{def:comm} is clearly equivalent to the commutativity of the associated self-adjoint operators $A_i$. \zvo{Note that in general, an instrument does not strongly commute with itself (cf.~proof of Theorem~4.3.1 in~\cite{Davies1976aa})}.
\end{remark}

Now we consider the composition of instruments --- the filtering application will involve one (actual) instrument on the probe and one (hypothetical, expressing our goal-variable) on the target system. 
\begin{theorem}\cite[Th.3.4.2]{Davies1976aa}\label{the:insComp}
Suppose $\hAb_i$, $i=1,2,\ldots n$ are instruments on some compact metrizable $\Omega_i$ with Borel field $\sF_i$. Then there exists a unique ``joint'' instrument $\hAb$ on $\Omega_1\times \Omega_2\times ... \times \Omega_n$ such that for all $E_i\in \sF_i$ and $\rho$ we have:
$$\hAb_{E_1\times E_2\times\cdots\times E_n} (\rho) = \hAb_{n,(E_n)} \circ ... \circ \hAb_{2,(E_2)} \circ \hAb_{1,(E_1)} (\rho)\, .$$ 
\end{theorem}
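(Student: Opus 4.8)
The plan is to build the joint instrument explicitly from Kraus-type decompositions and then verify its defining property and uniqueness. First I would invoke Theorem~\ref{the:InsPOVM} to write each factor in the form $\hAb_{i,(E_i)}(\rho) = \sum_{m\in N_i} A^i_m(E_i)\,\rho\,A^i_m(E_i)^*$. Composing them in the prescribed order $\hAb_{n,(E_n)} \circ \cdots \circ \hAb_{1,(E_1)}$ yields an operator-sum over the product index set $N_1\times\cdots\times N_n$, with each term of the form $K\rho K^*$ where $K = A^n_{m_n}(E_n)\cdots A^1_{m_1}(E_1)$. The candidate joint instrument $\hAb$ on the product $\sigma$-field is then defined on rectangles $E_1\times\cdots\times E_n$ by this expression, and one extends it to all of $\sF_1\otimes\cdots\otimes\sF_n$ by the Carath\'eodory/Hahn--Kolmogorov extension argument, using $\sigma$-additivity of each factor in the weak operator topology to get $\sigma$-additivity of the composite (the finite composition of CP maps is continuous, so weak-operator convergent series are preserved under composition).

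The verification that $\hAb$ is an instrument is then routine: positivity and the ordering $\hAb_E \geq \hAb_\emptyset = 0$ follow because $K\rho K^* \geq 0$ and composition of positive maps is positive; complete positivity is preserved under composition of CP maps; and the normalization $\tr{\hAb_{\Omega_1\times\cdots\times\Omega_n}(\rho)} = \tr{\rho}$ follows by applying the trace-preservation of $\hAb_{n,(\Omega_n)}$, then $\hAb_{n-1,(\Omega_{n-1})}$, and so on down to $\hAb_{1,(\Omega_1)}$, peeling off one factor at a time. For the defining identity, observe that on a rectangle the formula holds by construction, and rectangles generate the product $\sigma$-field, so a monotone-class / $\pi$--$\lambda$ argument extends agreement to all measurable sets.

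Uniqueness is where the only real subtlety lies. Two instruments on $\Omega_1\times\cdots\times\Omega_n$ that agree on all rectangles need not a priori be equal, since a CP-map-valued measure is not a scalar measure; one needs the fact that for each fixed $\rho$ (indeed for each fixed matrix element $\bra{\psi}\hAb_E(\rho)\ket{\phi}$, or equivalently by polarization for each $\bra{\psi}\hAb_E(\kebr{\xi})\ket{\psi}$) the map $E\mapsto \bra{\psi}\hAb_E(\rho)\ket{\phi}$ is a complex measure, and two complex measures that agree on a generating $\pi$-system agree everywhere. Running this over a spanning family of $\rho$ and of $\ket{\psi},\ket{\phi}$ pins down $\hAb_E$ as a linear map for every $E$. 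I would expect the main obstacle to be making the extension-from-rectangles step fully rigorous in the weak operator topology — in particular checking that the naive operator-sum over $N_1\times\cdots\times N_n$ actually converges weakly and defines a genuine CP map when the $N_i$ are countably infinite, rather than just formally manipulating finite Kraus sums. Since this is exactly the content cited as \cite[Th.3.4.2]{Davies1976aa}, I would lean on the measure-theoretic machinery already developed there (Davies, Ch.~3--4) rather than reprove the extension from scratch.
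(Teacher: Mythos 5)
The paper does not prove this statement at all: it is imported verbatim from Davies (Th.~3.4.2 of \cite{Davies1976aa}), so there is no in-paper argument to compare yours against. Judged on its own terms, your outline is essentially the right one and matches the structure of Davies' proof: define the composite on measurable rectangles, extend to the product $\sigma$-field, check the instrument axioms, and get uniqueness from the fact that $E\mapsto\bra{\psi}\hAb_E(\rho)\ket{\phi}$ is a complex measure and rectangles form a generating $\pi$-system. Two remarks. First, the Kraus decompositions are a detour you do not need: positivity, complete positivity, and trace preservation of the composite on rectangles follow directly from the corresponding properties of the $\hAb_{i,(E_i)}$ as CP maps, without ever choosing operators $A^i_m(E_i)$; invoking Theorem~\ref{the:InsPOVM} only adds the convergence worries you then have to dispel. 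Second, and more substantively, the extension step is not a Carath\'eodory/Hahn--Kolmogorov application in the usual scalar sense: the set function $(E_1,\ldots,E_n)\mapsto\hAb_{n,(E_n)}\circ\cdots\circ\hAb_{1,(E_1)}$ is a (CP-map-valued) multimeasure, and multimeasures in general need \emph{not} extend to measures on the product $\sigma$-field (this is the classical Morse--Transue obstruction for bimeasures). What rescues the construction is positivity together with the regularity of instruments on compact metrizable spaces --- this is exactly why the hypothesis of compact metrizability appears in the statement, and it is the content of the machinery in Davies Ch.~3--4 that you correctly propose to lean on rather than reprove. You flag this step as the main obstacle, which is the right instinct; just be aware that the fix is regularity of the measures, not a generic extension theorem.
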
\vspace{2mm}
We now prove the first result of this paper.
\begin{theorem}\label{the:jointDist}
Suppose $\hAb_i$, $i=1,2,\ldots p$ are instruments on some compact metrizable $\Omega_i$ with Borel field $\sF_i$ and $\hA_i$ are the corresponding POVMs. If the $\hAb_i$ are pairwise strongly commutative, then the POVM $\hA$ corresponding to the joint instrument $\hAb$ is uniquely determined by the POVMs $\hA_i\,$, according to 
$$\hA(E_1\times E_2\times...\times E_p) = \hA_1(E_1)\hA_2(E_2)\, ... \,\hA_p(E_p) \, .$$
Moreover, the POVMs are mutually commutative, that is $[ \hA_i(E_i),\hA_j(E_j)] = 0$ for all $E_i,\; E_j,\; i\neq j$.
\end{theorem}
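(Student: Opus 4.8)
The plan is to reduce to product sets, settle the case $p=2$ by an explicit computation with Kraus operators, and then bootstrap to general $p$ by induction. Since the asserted formula for $\hA$ concerns only sets of the form $E_1\times\cdots\times E_p$, the first move is to invoke Theorem~\ref{the:insComp}: the joint instrument $\hAb$ exists and on such a product set equals $\hAb_{p,(E_p)}\circ\cdots\circ\hAb_{1,(E_1)}$, so it suffices to compute, via Theorem~\ref{the:InsPOVM}, the POVM associated to this composed CP map. Likewise the ``moreover'' claim $[\hA_i(E_i),\hA_j(E_j)]=0$ involves only two instruments at a time, hence is really a by-product of the $p=2$ analysis applied to each pair $i\neq j$.

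For $p=2$, strong commutativity (Definition~\ref{def:comm}) supplies, for the fixed pair $(E_1,E_2)$, Kraus systems $\{A^1_m(E_1)\}$ and $\{A^2_n(E_2)\}$ with $[A^1_m(E_1),A^2_n(E_2)]=[A^1_m(E_1),A^2_n(E_2)^\ast]=0$; taking adjoints of these two identities shows that all cross-pairs among $\{A^1_m(E_1),A^1_m(E_1)^\ast\}$ and $\{A^2_n(E_2),A^2_n(E_2)^\ast\}$ commute. Composing the two Kraus decompositions gives $\hAb_{E_1\times E_2}(\rho)=\sum_{m,n}(A^2_n(E_2)A^1_m(E_1))\,\rho\,(A^2_n(E_2)A^1_m(E_1))^\ast$, so by Theorem~\ref{the:InsPOVM} the associated POVM is $\hA(E_1\times E_2)=\sum_{m,n}A^1_m(E_1)^\ast A^2_n(E_2)^\ast A^2_n(E_2)A^1_m(E_1)$. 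I would then use the commutation relations to slide $A^1_m(E_1)^\ast$ rightward past $A^2_n(E_2)^\ast$ and $A^2_n(E_2)$ (but not past $A^1_m(E_1)$), reaching $\sum_{m,n}A^2_n(E_2)^\ast A^2_n(E_2)\,A^1_m(E_1)^\ast A^1_m(E_1)$, and factor the double sum as $\hA_2(E_2)\hA_1(E_1)$. The same relations make every summand $A^1_m(E_1)^\ast A^1_m(E_1)$ commute with every summand $A^2_n(E_2)^\ast A^2_n(E_2)$, whence $[\hA_1(E_1),\hA_2(E_2)]=0$; the product may thus be written in either order, which is both the claimed formula and, pair by pair, the ``moreover'' statement. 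The only routine point is justifying the rearrangement of the series in the weak operator topology.

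For general $p$ I would induct, using that uniqueness in Theorem~\ref{the:insComp} makes the formation of joint instruments associative: the joint instrument of $\hAb_1,\ldots,\hAb_p$ is the composition of the joint instrument $\hAb^{(p-1)}$ of $\hAb_1,\ldots,\hAb_{p-1}$ with $\hAb_p$. To close the induction I need $\hAb^{(p-1)}$ to strongly commute with $\hAb_p$, and this is where the main difficulty lies: the Kraus operators of $\hAb^{(p-1)}_{E_1\times\cdots\times E_{p-1}}$ produced by the induction hypothesis are products $A^{p-1}_{m_{p-1}}(E_{p-1})\cdots A^1_{m_1}(E_1)$, and I must be able to choose the individual Kraus systems so that each factor commutes with one fixed Kraus system of $\hAb_{p,(E_p)}$ --- i.e.\ to promote the pair-dependent commuting families granted by Definition~\ref{def:comm} to a single pairwise-commuting family for the tuple $(E_1,\ldots,E_p)$. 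This holds trivially in the setting that motivates the theorem, where the $\hAb_i$ act on distinct tensor factors so that their Kraus operators can be taken of the form $X\otimes\bbI$ or $\bbI\otimes Y$ and commute independently of the representation; in the abstract setting it would need a separate argument (or a slight strengthening of Definition~\ref{def:comm}). Granting it, applying the $p=2$ case to $(\hAb^{(p-1)},\hAb_p)$ gives $\hA(E_1\times\cdots\times E_p)=\hA^{(p-1)}(E_1\times\cdots\times E_{p-1})\,\hA_p(E_p)$, and the induction hypothesis yields the full product formula.
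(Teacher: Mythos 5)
Your $p=2$ argument is correct and is essentially the paper's: both compose the Kraus systems supplied by Definition~\ref{def:comm} for the fixed pair $(E_1,E_2)$, read off the associated POVM via Theorem~\ref{the:InsPOVM}, and use the commutation relations \eqref{asn3} to factor $\sum_{m,n}{A^1_m}^*{A^2_n}^*A^2_nA^1_m$ into $\hA_1(E_1)\hA_2(E_2)=\hA_2(E_2)\hA_1(E_1)$. The only cosmetic difference is that you work at the operator level while the paper evaluates $\tr{\rho\,\cdot\,}$ against an arbitrary state $\rho$ and concludes equality of operators from equality of all such traces.

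For general $p$, however, the gap you flag yourself is genuine, and it is precisely the point where the paper's proof takes a different route that avoids it. Your induction treats $\hAb^{(p-1)}$ as a single instrument and needs it to strongly commute with $\hAb_p$; this forces you to upgrade the \emph{pair-dependent} commuting Kraus families of Definition~\ref{def:comm} to one family for $\hAb_1$ that simultaneously commutes with adapted families for $\hAb_2,\ldots,\hAb_p$. Nothing in the definition supplies that, and "granting it" leaves the theorem unproved in the stated generality. The paper's recursion never forms a joint Kraus family across three or more instruments: it works with the scalar quantities $\tr{\hAb_{1,(E_1)}(\sigma)\,X}$, which depend only on the CP map $\hAb_{1,(E_1)}$ and \emph{not} on which operator-sum representation is used to expand it. Concretely, for $p=3$ one first applies the $p=2$ identity to $\hAb_2,\hAb_3$ acting on $\hAb_{1,(E_1)}(\rho)$ to get $\tr{\rho\,\hA_E}=\tr{\hAb_{1,(E_1)}(\rho)\,\hA_2\hA_3}$; then one expands $\hAb_{1,(E_1)}$ in the representation adapted to the pair $(\hAb_1,\hAb_2)$ to pull $\hA_2$ through and rewrite this as $\tr{\hAb_{1,(E_1)}(\rho\hA_2)\,\hA_3}$ (a representation-independent expression again); then one re-expands $\hAb_{1,(E_1)}$ in the \emph{different} representation adapted to $(\hAb_1,\hAb_3)$ to pull $\hA_3$ through. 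Switching Kraus representations between steps is legitimate exactly because each intermediate expression is a trace of the instrument itself, so only pairwise strong commutativity is ever invoked. If you want to keep your operator-level induction instead, you would have to either build this representation-switching into the argument (e.g.\ by proving the identity $\hAb_{1,(E_1)}(\sigma)\,\hA_j = \hAb_{1,(E_1)}(\sigma\hA_j)$ for each $j$ separately and chaining them) or strengthen Definition~\ref{def:comm} to demand a single commuting family per instrument, which would change the hypothesis of the theorem.
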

\begin{proof}
We first prove the result for $p=2$. 
Select some events $E_1\in \sF_1$, $E_2\in \sF_2$ and let $E=E_1 \times E_2$.
From Definition~\ref{def:comm} construct sequences of operators $\{A^1_n:n\in N_1\}$ and $\{A^2_n:n\in N_2\}$ satisfying \eqref{asn1},\eqref{asn3}, where for notational convenience we have written $A^i_n$ for $A^i_n(E_i)$, with $i=1,2$.
Then for all $\rho\in \cS(\cH)$, we have
\begin{eqnarray}
\nonumber \tr{\rho \hA_E} &=& \tr{\hAb_E(\rho)}\\
\nonumber \text{\scriptsize{(Th.\ref{the:insComp})}} &=& \tr{\hAb_{2,(E_2)} \circ \hAb_{1,(E_1)} (\rho)}\\
\nonumber \text{\scriptsize{(Th.\ref{the:InsPOVM})}} &=& \tr{ \sum_{m\in N_1,n\in N_2} A^2_n A^1_m\, \rho\, {A^1_m}^* {A^2_n}^*}\\
\nonumber \text{\scriptsize{(trace property)}} &=& \tr{ \rho \sum_{m\in N_1,n\in N_2} {A^1_m}^* {A^2_n}^* \, A^2_n A^1_m}\\
\nonumber \text{\scriptsize{(Def.\ref{def:comm}, \eqref{asn3})}} &=& \tr{ \rho \sum_{m\in N_1} {A^1_m}^*A^1_m \sum_{n\in N_2} {A^2_n}^* A^2_n }\\
\label{thatslast} &=& \tr{\rho \hA_1 \hA_2} \, = \,\tr{\rho \hA_2 \hA_1} \, .
\end{eqnarray}
To see how the recursive argument works, consider $p=3$. The above leads to
$\tr{\rho \hA_E} = \tr{\hAb_{1,(E1)}(\rho) \hA_2 \hA_3} \, .$ 
Choosing representations \eqref{newrep} such that instruments $\hAb_1$ and $\hAb_2$ commute, we get
$\tr{\rho \hA_E} = \tr{\hAb_{1,(E1)}(\rho \hA_2) \hA_3} \, .$
Now choosing representations \eqref{newrep} such that $\hAb_1$ and $\hAb_3$ commute, we get the result.
\end{proof}


\section{Conditioning with respect to a POVM}

We are now in a position to define the conditional POVM associated with two strongly commuting instruments. Recall that we wish to find an expression for the conditional POVM that is expressed only in terms of the POVMs and not in terms of the instruments themselves (cf. Remark~\ref{rem:povmIns}). 


\subsection{Basic definitions}\label{sec:basDef}

In this section, let $\cH$ be a Hilbert space, $\Omega_A$ and $\Omega_B$ be two compact metrizable sets with Borel algebras $\sF_A$ and $\sF_B$, respectively, and $\hA$ and $\hB$ be two $\cH$-valued POVMs on $\Omega_A$ and $\Omega_B$ corresponding to the instruments $\hAb$ and $\hBb$.

We denote $\rho$ a state on $\cH$ and introduce the semi-norm $\|\cdot\|_\rho = |\tr{\cdot \rho} |$ on the set of bounded operators on $\cH$. Two operators $\Gamma_1,\Gamma_2$ on $\cH$ are said to be \al{$\rho$-equivalent} (written $\Gamma_1 \equiv \Gamma_2$) if $\|\Gamma_1-\Gamma_2\|_\rho = 0$. If $(\Omega,\sF,\mu)$ is a probability space then two functions $f,g:\Omega\to \cL^+$ are said to be $\rho$-equivalent if $\|f(x)-g(x)\|_\rho= 0$ for $\mu$-a.e. $x\in \Omega$. 

If $f$ is any measurable function on $\Omega_A$, then the integral of $f$ with respect to $\hA$ over a set $E\in \sF_A$ is defined by:\footnote{It should also be possible to define this integral as a Stiltjes integral over the measurable space $\cL_+(\cH)$ with the measure induced by the POVM.}
$$\int_{\omega\in E}f(\omega) \tr{\rho\,\hA(d\omega)} \triangleq \int_{\omega\in E} f(\omega) d\mu_{\rho,\hA}(\omega)$$
where the measure $\mu_{\rho,\hA}(\cdot) \triangleq \tr{\rho \hA(\cdot)}$ is a probability measure on $(\Omega_A,\sF_A)$. 

\subsection{Conditional Expectation}\label{sec:condexp}

Following~\cite{Bouten2009aa}, one approach to conditioning and filtering is through the conditional expectation. When a vector space is associated to the POVM measurement results $\Omega_A$, we can compute the \emph{expectation value of a POVM $\hA$} in state $\rho$ by:
$$\bbE_\rho[\hA] = \int_{\omega\in \Omega}\omega \, \tr{\rho\, \hA(d\omega)} \, .$$

If $\hAb$ and $\hBb$ are strongly commuting instruments, we can set $\Omega=\Omega_A\times \Omega_B$ with the product $\sigma$-field and define $\hA\hB$ the product POVM as in Theorem~\ref{the:jointDist}. On the product space $\Omega$ we can then define two classical random variables $\alpha:(\omega_A,\omega_B) \mapsto \omega_A$ and $\beta:(\omega_A,\omega_B)\mapsto \omega_B$, and we know from the classical Kolgomorov theory that the conditional expectation
$\bbE [\alpha|\beta]: \Omega \to \Omega_A$ exists and is a (a.e.) unique random variable that is measurable with respect to $\sF_{\beta} \equiv \sF_B$. Defining $\sG_A$ the $\sigma$-algebra of subsets of $\Omega_A$ generated by $\bbE[\alpha|\beta]$,\footnote{Thus, ``$\sG_A$ is as coarse-grained as $\sF_B$ or coarser'', in the sense that it is the $\sigma$-algebra generated by an $\sF_B$-measurable function.} we can lift this to a well-defined \emph{conditional expectation POVM:}
\begin{eqnarray}
	\label{eqn:condPOVMdef} \hbbE_{\hA \vert \hB}(E) =  \hB\left(\bbE[\alpha\big|\beta]^{-1}(E)\right) \quad \text{ for any } E \in \sG_A\, .
\end{eqnarray}
This basically attributes to event $E$ the $\hB$-POVM element associated to the union of $\beta$ for which $\bbE[\alpha|\beta] \in E$.

\al{However, unlike in the PVM case~\cite{Bouten2009aa},  $\hbbE_{\hA \vert \hB}$ is not trivial to use for filtering purposes. Indeed, because in general $\hB(E_1) \hB(E_2) \neq 0$ even for disjoint events $E_1,E_2$, it is not clear how to input an actual measurement result for $\hB$ into this expression.}

\subsection{Conditional \zva{POVMs}}\label{sec:condprob}


\zva{We now consider an approach that is motivated from classical conditional probability.}
\begin{theorem}\label{thm:conpro}
 Let $\hAb$, $\hBb$ as defined in Section~\ref{sec:basDef} be strongly commuting instruments. There exists a (not necessarily unique) map
$\hP:\sF_A\times \Omega_B\to \cL^+$ 
such that: 
\begin{enumerate}
\item For all $E\in \sF_A$ and $F\in \sF_B$, we have 
$$\hA(E)\hB(F) \equiv \int_F \hP(E,\omega) \tr{\rho \hB(d\omega)}.$$
\item For $\mu_{\rho,\hB}$-almost-every $\omega\in \Omega_B$ and every mutually disjoint sequence $E_1,E_2,\ldots\in \sF_A$ we have the countable additivity condition
$$\hP\left(\bigcup_{n=1}^\infty E_n, \omega\right) \equiv \sum_{n=1}^\infty \hP(E_n,\omega).$$
\item $\hP(\Omega_A,\omega) \equiv \bbI_\cH$ for $\mu_{\rho,\hB}$-a.e. $\omega\in \Omega_B$.
\end{enumerate}
If $\hP'$ is another such map then $\hP$ and $\hP'$ are $\rho$-equivalent.
\end{theorem}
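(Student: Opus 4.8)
The plan is to recognize this as a Radon–Nikodym / disintegration statement in disguise, and to reduce it to the classical vector-valued (or scalar, via polarization) conditional probability theorem. First I would fix the state $\rho$ and the two strongly commuting instruments, and form the joint POVM $\hA\hB$ on $\Omega_A\times\Omega_B$ as in Theorem~\ref{the:jointDist}, so that $\hA(E)\hB(F)\equiv\hB(F)\hA(E)$ and the map $(E,F)\mapsto\tr{\rho\,\hA(E)\hB(F)}$ is a genuine (scalar, complex-valued but in fact, via the commutativity and positivity, well-controlled) measure on the product $\sigma$-field. For each fixed $E\in\sF_A$, consider the set function $F\mapsto \tr{\rho\,\hA(E)\hB(F)}$ on $\sF_B$; I would show it is absolutely continuous with respect to $\mu_{\rho,\hB}(\cdot)=\tr{\rho\,\hB(\cdot)}$, since $\hB(F)\equiv 0$ (i.e.\ $\tr{\rho\,\hB(F)}=0$) forces $\tr{\rho\,\hA(E)\hB(F)}=0$ using $0\le \hA(E)\le\bbI$ together with the Kadison–Schwarz/operator-monotonicity estimate $|\tr{\rho\,\hA(E)\hB(F)}|^2\le \tr{\rho\,\hB(F)}\cdot\tr{\rho\,\hB(F)^{1/2}\hA(E)^2\hB(F)^{1/2}}\le \|\hA(E)\|^2\,\tr{\rho\,\hB(F)}$. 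The Radon–Nikodym theorem then yields, for each $E$, an $\sF_B$-measurable scalar density; the content of part~(1) is to upgrade this to an operator-valued density $\hP(E,\cdot)$.

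To get the operator version I would test against arbitrary rank-one perturbations: for any bounded operator $X$, consider the signed/complex measure $F\mapsto \tr{X\,\hA(E)\hB(F)}$ (or, to keep everything inside the positive cone, work with $\tr{(\rho+\epsilon\kebr{\phi})\,\cdot}$ and take differences), each absolutely continuous w.r.t.\ $\mu_{\rho,\hB}$ by the same estimate, and collect the resulting scalar densities into a single sesquilinear form $(\phi,\psi)\mapsto h_E(\omega)(\phi,\psi)$ depending measurably on $\omega$; boundedness of this form (again from $\|\hA(E)\|\le 1$) lets me represent it, for $\mu_{\rho,\hB}$-a.e.\ $\omega$, by a bounded operator $\hP(E,\omega)$, which is moreover in $\cL^+$ because $\hA(E)\ge 0$ and $\hB(F)\ge 0$ make the approximating averages $\hB([\omega-\delta,\omega+\delta])^{-1/2}\hA(E)^{1/2}\cdots$ positive. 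That gives property~(1). Properties~(2) and~(3) then follow by transporting the corresponding identities for $\hA$ — countable additivity of $E\mapsto\hA(E)$ in the weak operator topology and $\hA(\Omega_A)=\bbI$ — through the defining relation in~(1): for a disjoint sequence $\{E_n\}$, both $\hP(\bigcup_n E_n,\cdot)$ and $\sum_n\hP(E_n,\cdot)$ satisfy~(1) for the same left-hand side $\hA(\bigcup_n E_n)\hB(F)=\sum_n\hA(E_n)\hB(F)$ (monotone convergence / weak-operator additivity under the trace with $\rho$), and a.e.\ uniqueness (established next) forces them to be $\rho$-equivalent; similarly $\hP(\Omega_A,\cdot)$ and $\bbI_\cH$ both represent $F\mapsto\hB(F)$.

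Finally, for the uniqueness clause: if $\hP$ and $\hP'$ both satisfy~(1), then for every $E\in\sF_A$ and every $F\in\sF_B$ we have $\int_F \bigl(\hP(E,\omega)-\hP'(E,\omega)\bigr)\tr{\rho\,\hB(d\omega)}\equiv 0$, i.e.\ $\tr{\rho\,\int_F(\hP(E,\omega)-\hP'(E,\omega))\,\hB(d\omega)}=0$; since this holds for all $F$, the standard characterization of the zero element in $L^1(\mu_{\rho,\hB})$ gives $\|\hP(E,\omega)-\hP'(E,\omega)\|_\rho=0$ for $\mu_{\rho,\hB}$-a.e.\ $\omega$, for each fixed $E$. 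A separability argument (using that $\Omega_A$ is compact metrizable, hence $\sF_A$ is countably generated, and that $\rho$-equivalence is preserved under the countable operations) then promotes this to a single null set working simultaneously for all $E$, which is the asserted $\rho$-equivalence of the two maps.

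The main obstacle I anticipate is the passage from scalar Radon–Nikodym densities to a jointly measurable \emph{operator}-valued density $\hP(E,\cdot)$ with values in $\cL^+$: one must control the sesquilinear form uniformly enough in $\omega$ to extract a bounded positive operator for a.e.\ $\omega$, and in infinite dimensions this requires either a separability/lifting argument or working with a weak-$*$ measurable vector-valued Radon–Nikodym theorem. The positivity (rather than mere self-adjointness) of $\hP(E,\omega)$ is the subtle point, and the cleanest route is to realize $\hP(E,\omega)$ as an a.e.\ limit of the manifestly positive ``conditional averages'' $\hB(F_\delta(\omega))^{-1/2}\,\bigl(\text{average of }\hA(E)\text{-type terms}\bigr)\,\hB(F_\delta(\omega))^{-1/2}$ over shrinking neighborhoods $F_\delta(\omega)\downarrow\{\omega\}$, invoking a martingale/differentiation theorem along a refining sequence of partitions of $\Omega_B$.
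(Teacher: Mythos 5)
Your proposal is essentially sound and, in its closing paragraph, arrives at the very construction the paper actually uses; but your primary route differs, so a comparison is worthwhile. The paper first invokes the classical regular conditional probability theorem (Ash, Thm.~6.6.6, using that $\Omega_A$ is compact metrizable) to obtain a scalar kernel $\bbP(E,\omega)$ already enjoying all three properties, and then follows Kupka: it fixes a \emph{lifting} of $\mu_{\rho,\hB}$, forms the net of manifestly positive operator-valued simple functions $\hP_\pi(E,\cdot)=\sum_i \hC(E,S_i)\,\chi_{S_i}(\cdot)/\mu_{\rho,\hB}(S_i)$ over partitions $\pi$ whose cells are fixed by the lifting, and extracts a weak-$\ast$ cluster point via Banach--Alaoglu. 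This is exactly your ``cleanest route'' of positive conditional averages, except that no martingale/differentiation theorem is invoked: Kupka's Lemma~4.3 gives convergence of the scalar traces $\tr{\rho\hP_\pi(E,\cdot)}$ to $\bbP(E,\cdot)$, and weak-$\ast$ compactness then supplies a (possibly non-unique) operator-valued limit point --- which is precisely the source of the ``not necessarily unique'' in the statement. Your headline route (scalar Radon--Nikodym for each test vector, assembled into a measurable sesquilinear form) carries the version-selection and positivity burdens you yourself flag, and in a non-separable $\cH$ it needs a lifting anyway, so the paper's order of operations (lifting first, positivity for free from the averages) is the more economical one. Two smaller points: since $\hA(E)$ and $\hB(F)$ commute by Theorem~\ref{the:jointDist}, one has $0\le\hA(E)\hB(F)\le\hB(F)$, which yields absolute continuity with respect to $\mu_{\rho,\hB}$ without any Kadison--Schwarz estimate; and deriving property~(2) from a.e.\ uniqueness, as you propose, only yields ``for each disjoint sequence, for a.e.\ $\omega$'' --- obtaining the stronger quantifier order requires starting from a genuinely \emph{regular} conditional probability, which is where the topological hypothesis on $\Omega_A$ enters and which the paper secures at the scalar level before lifting to operators.
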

\begin{definition}\label{def:conpro}
We call \al{$\hP_{A \vert B} \triangleq \hP$} the \emph{conditional POVM of $\hA$ given $\hB$}. \zva{Then $\tr{\rho\; \hP_{A \vert B}(E,\omega)} \!= \mathbb{P}(\hA \!\in\!\! E \vert \hB \!\in\!\! \{\omega\})$ gives}
the probability that a measurement of $\hA$ is in $E$ \al{when knowing that $\omega$ is obtained from a $\hB$ measurement.}
\end{definition}

\begin{proof}
The proof follows~\cite{Kupka1972aa} and uses the notion of a lifting. Let $(\Omega,\sF,\mu)$ be a measure space and denote $E\thicksim_\mu F$ if $E \in \sF$ and $F \in \sF$ differ by a $\mu$-measure zero set. A lifting of $\mu$ is a map $\phi:\sF\to \sF$ such that 
\begin{enumerate}
\item $\phi(E) \thicksim_\mu E$.
\item $E\thicksim_\mu F$ implies $\phi(E) = \phi(F)$.
\item $\phi(E\cup F) = \phi(E)\cup \phi(F)$ and $\phi(E\cap F) = \phi(E)\cap \phi(F)$.
\item $\phi(\emptyset) = \emptyset$ and $\phi(\Omega) = \Omega$.
\end{enumerate}

Denote $\hC(E,F) = \hA(E)\hB(F)$ the joint POVM on $\Omega_A \times \Omega_B$, see Theorem~\ref{the:jointDist}, and let $\mu_{\rho,\hA}$, $\mu_{\rho,\hB}$ and $\mu_{\rho,\hC}$ be the associated probability measures on $\Omega_A$, $\Omega_B$ and $\Omega_A\times \Omega_B$. 

Because $\Omega_A$ is a complete metric space we can apply Theorem 6.6.6 in~\cite{Ash1972aa} and this implies there exists a regular conditional probability $\bbP:\sF_A\times \Omega_B\to [0,1]$ which is $\sF_B$-measurable for any $E\in\sF_A$ and satisfies 
\begin{enumerate}
\item $\mu_{\rho,\hC}(E,F) = \int_F\bbP(E,\omega) d\mu_{\rho,\hB} (\omega)$ for all $E\in \sF_A$ and $F\in \sF_B$.
\item For every mutually disjoint sequence $E_1,E_2,\ldots\in \sF_A$ and $\mu_{\rho,\hB}$-a.e. $\omega$ we have 
$$\bbP\left(\bigcup_{n=1}^\infty E_n, \omega\right) \equiv \sum_{n=1}^\infty \bbP(E_n,\omega).$$
\item For $\mu_{\rho,\hB}$-a.e. $\omega$ we have $\bbP(\Omega_A,\omega) = 1$.
\end{enumerate}

Following~\cite{Kupka1972aa}, fix $E\in \sF_A$, denote $S$ the support of $\bbP(E,\cdot)$ in $\Omega_B$ and let $\phi$ be a lifting for $\mu$ restricted to the measurable subsets of $S$. Consider $\Pi$ be the collection of partitions of $S$ such that $\pi= \{S_1,\ldots,S_n\}\in \Pi$ implies $\phi(S_i) = S_i\neq \emptyset$ for all $i$. Such $\Pi$ always exists and is a directed set under refinement~\cite{Kupka1972aa}. Define
\begin{eqnarray}
\nonumber \bbP_\pi(E,\cdot) &=& \sum_i \frac{\mu_{\rho,\hC}(E,S_i)}{\mu_{\rho,\hB}(S_i)}\chi_{S_i}(\cdot) \;\; \text{ and}\\
\label{eq:tolim} \hP_\pi(E,\cdot) &=& \sum_i \frac{\hC(E,S_i)}{\mu_{\rho,\hB}(S_i)}\chi_{S_i}(\cdot)
\end{eqnarray}
where $\chi_{S_i}$ is the characterisitc function of the set $S_i$. From the definition of $\hC$, we have $\mathrm{Tr}\{\rho\hP_\pi \} = \bbP_\pi$ and by \cite[Lemma 4.3]{Kupka1972aa} the limit exists and gives
\begin{equation}\label{eq:KupkaLimit}
\lim_\pi \tr{\rho \hP_\pi(E,\cdot)} = \lim_\pi \bbP_\pi(E,\cdot) = \bbP(E,\cdot) \, .
\end{equation}
The set $\cL^+_1$ of positive operators on $\mathcal{H}$ with norm less than 1 is a closed and bounded subset \zva{of $\mathcal{L}(\mathcal{H})$.} 
Therefore, it is compact in the weak-$\ast$ topology\footnote{A sequence $A_n\in \cL^+$ converges to $A\in \cL$ in the weak-$\ast$ topology if and only if $\tr{\rho A_n}$ converges to $\tr{\rho A}$ for all trace class operators $\rho$. We write this convergence $\lim^\ast_n A_n = A$.} by the Banach-Alaoglu theorem (see e.g.~\cite[Sec 3.15]{Rudin1973aa}), so there exists a (not necessarily unique) positive operator $\hP(E,\cdot)$ such that 
$$\lim^\ast_\pi \hP_\pi(E,\cdot) = \hP(E,\cdot).$$
By definition of weak-$\ast$ convergence we have $\lim_\pi \mathrm{Tr}\{ \rho\hP_\pi(E,\cdot) \} = \mathrm{Tr}\{ \rho \hP(E,\cdot)\}$ and by \eqref{eq:KupkaLimit} we get $\mathrm{Tr}\{\rho \hP(E,\cdot)\} = \bbP(E,\cdot)$. The proof now follows from the properties of $\bbP$. 
\end{proof}

\zvo{
\begin{definition}
The conditional expectation of $\hA$ \al{knowing that $\hB\in \{\omega_B\}$} is given by 
$$\bbE_\rho[\hA|\hB\in\{\omega_B\}] = \int_{\omega_A\in\Omega_A} \omega_A \tr{\rho\, \hP_{A\vert B}(d\omega_A,\omega_B)}.$$
\end{definition}}


\section{Filtering with POVMs}\label{sec:Filtering}

Our measurement model is motivated from the discrete-time model used in~\cite{Bouten2009aa} for filtering using PVMs. We consider a system with Hilbert space $\cH_S$ and a probe consisting of a sequence of subsystems $n=1,2,...$, each with Hilbert space $\cH_n = \cH$. So the probe is described on a Hilbert space $\cH_P = \otimes_{n=1}^\infty\cH_n$; the combined state space of the probe and system is written $\cH_{tot} = \cH_S \otimes \cH_P$. \al{In the practical setup, we will assume that system and probe are initially separated, and at consecutive times $n=1,2,...$ the system undergoes an interaction with probe subsystem $\cH_n$, which is then measured by a POVM. We want to know the evolution of $\rho^S_n$, the system state after $n$ interactions and probe measurements, conditioned on the latters' outcomes.}

We therefore suppose  that $\Omega_n$ is a compact metrizable space\footnote{If $\Omega_n$ is not compact then we can simply consider the 1-point compactification of $\Omega_n$~\cite[p.12]{Davies1976aa}.} for $n=1,2,...$, let $\sF_n$ a $\sigma$-field on $\Omega_n$ and $\hBb_n:\sF_n \to \CP(\cH_n)$ an instrument with corresponding POVM $\hB_n$. Also, let $(\Omega_A,\sF_A)$ be some measure space and $\hAb:\sF_A\to\CP(\cH_S)$ any system instrument with corresponding POVM $\hA$. Note that we do not associate the $\Omega_n$ to a vector space; this allows to consider probe measurements with results on manifolds, like the circle for phase measurements.

To consider the `active' part while the sequence of interactions progresses, we set $\cH_{n]} = \cH_S$ for $n=0$ and recursively define $\cH_{n]} = \cH_{n-1]}\otimes \cH_n$ for $n\geq 1$; similarly define $\Omega_{n]} = \Omega_A\times\Omega_1\times\cdots\times\Omega_n$. 
Consider the initial state on $\cH_{tot}$,
$$\rho^{tot}_0 = \rho^S_0 \,{\textstyle \bigotimes_{n=1}^\infty}\, \rho^P_n.$$
We suppose that between time steps $n$ and $n+1$, the system interacts with the probe according to a unitary evolution operator $U_n$ on $\cH\otimes \cH_n$, i.e.~it interacts only with subsystem $n$ of the probe. After this unitary evolution, the POVM $\hB_n$ is measured to be some $\omega_n \in \Omega_n$. With a slight abuse of notation, let $U_{n]} = \prod_{i=1}^n U_n$, a unitary on $\cH_{n]}$.

\al{It seems physically reasonable to assume that the instrument associated in this setup to the $\cH_{\text{tot}}$-POVM $\hB_n$ acts non-trivially only on $\cH_n$, i.e.~its operator-sum representation has elements $\{ B^n_i \}_i$ of the form
$$B^n_i = \bbI_{\cH_S} \,({\textstyle \bigotimes_{m=1}^{n-1}} \, \bbI_{\cH}) \;\otimes  b^n_i \, ({\textstyle \bigotimes_{m=n+1}^\infty}  \, \bbI_{\cH}) \; .$$ 
One formal argument for this is that the measurement generally acts at a place away from the system, and often also away from the other (e.g.~travelling or already destroyed) probe subsystems. Similarly, when speaking of system properties through the $\cH_{\text{tot}}$-POVM $\hA$, it makes sense to assume that the associated instrument $\hAb$ acts nontrivially only on $\cH_S$.}  Therefore the set of instruments $\hAb$, $\hBb_n$ for $n=1,2,...$ would strongly commute. Then the same commutations apply to the evolved instruments \zva{where the effect of $n$ interactions is described in the Heisenberg picture,}
$$\hAb(n) \triangleq U_{n]} \hAb  U_{n]}^* \quad \text{and} \quad \hBb_i(n) \triangleq U_{n]} \hBb_i  U_{n]}^* \, .$$

\al{Therefore, the conditional probability of $\hA(n)$ with respect to $\hB_1(n)\hB_2(n)\cdots \hB_n(n)$ is well-defined according to Theorems \ref{the:jointDist} and \ref{thm:conpro}: there exists a function 
$$\hP^A_{n}\left(E,\omega_{n]}\right) \triangleq \hP_{\hA(n)\vert \hB_1(n)\hB_2(n)\cdots \hB_n(n)}\left(E,\omega_{n]}\right)$$
such that $\tr{\hP^A_{n}\left(E,\omega_{n]}\right) \rho_0^{tot}}$ gives the probability of $\hA$-events, knowing the outcomes $\omega_{n]} \triangleq\omega_1,...,\omega_n$ of the $n$ first probe measurements after interactions with the system. 
From \eqref{eq:tolim}, the map $\Phi_{\rho_0^{tot}} : \hA \to \bbE_{\rho_0^{tot}} [\hA|\omega_{n]}]$ is linear and real-valued.\footnote{Note that $\Phi$ is unique since all possible $\hP$ are $\rho_0^{tot}$-equivalent.}  Thus by Gleason's theorem, at least for finite-dimensional $\cH_S$, there exists a unique density operator, which we identify as \emph{the post-measurement state $\rho^S_n(\omega_{n]})$}, such that $\Phi_{\rho_0^{tot}}(\hA) = \tr{\hA \, \rho^S_n(\omega_{n]})}$ for all $\hA$. Algebraic computations based on property 1) of Theorem \ref{thm:conpro} then lead to the well-known expression:
$$\rho^S_n(\omega_{n]}) \, = \,  \frac{\trc{U_{n]} \rho_0^{tot} U_{n]}^\ast \; \hB(\omega_{n]})}{P}}{\tr{U_{n]} \rho_0^{tot} U_{n]}^\ast \; \hB(\omega_{n]})}}\, ,$$
\zva{where $\trc{\cdot}{P}$ is the partial trace over $\cH_P$.}
Note that, thanks to the strong commutativity condition, this expression depends on the POVM but not on the associated instrument. Moreover, in this expression only the POVM element associated to the actually observed $\omega_{n]}$ is needed, irrespective of the other potentialities completing the POVM. Interesting implications of this are best illustrated by the following example.

\zva{\emph{Qubit phase:} $\;$ Consider a probe composed of qubits, $\cH = \text{span}_{\mathbb{C}}\{\q{0},\q{1}\} \cong\mathbb{C}^2$, on which we apply the POVM with $M$ elements $d=0,...,M-1$,
$$\hB_{\cH}(d) =  \tfrac{1}{M} (\q{0}+e^{2i\pi d/M}\q{1})(\qd{0}+e^{-2i\pi d/M}\qd{1}) .$$
Then the effect on the system of a detection result e.g.~$\omega_1=0$ is the same, whether $M=2$ (projective measurement), or $M=3,4,...$ or any larger number. It seems legitimate to attribute the same effect to the continuous POVM limit.}}

\section{Conclusion and Future work}\label{sec:concl}

In this paper we show how quantum filtering can be performed in the POVM setting.
We formalize a notion of strongly commuting instruments, which gives a sufficient condition to define joint measurement statistics in terms of the associated POVMs only, without explicitly depending on the instruments. \al{We introduce the notion of conditional-expectation-POVM for measurements with commuting instruments, and highlight that it can be inappropriate for filtering, unlike in the PVM case. We then show that the notion of conditional probabilities can be defined for strongly commuting instruments. On that basis, for a system-probe model, we analyze filtering of the system state conditioned on POVM probe measurements, and highlight its general properties. A future goal is to apply these ideas to derive a filtering equation for discrete-time \emph{phase} measurements, a quintessential example of a POVM-type measurement where the associated instrument is not known.} On a more theoretical mode, we also should explore \emph{necessary} conditions for the \zva{strong} commutation of two instruments.

 \bibliographystyle{IEEEtran}

\end{document}